\newtheorem{prop}{Proposition}[section]
\newtheorem{theorem}{Theorem}
\DeclareMathOperator*{\argmax}{arg\,max}
\newcommand{\qyl}[1]{{\color{black}#1}}
\newcommand{\YL}[1]{{\color{black}#1}}
\begin{document}

%%%%%%%%% TITLE
\title{Learning-based Real-time Detection of Intrinsic Reflectional Symmetry}

\author{Yi-Ling Qiao\textsuperscript{1,3}, Lin Gao\textsuperscript{1}\thanks{Corresponding Author}, Shu-Zhi Liu\textsuperscript{1,2}, Ligang Liu\textsuperscript{4}, Yu-Kun Lai \textsuperscript{5}, Xilin Chen\textsuperscript{1}\\
\textsuperscript{1}Institute of Computing Technology, Chinese Academy of Sciences \\
\textsuperscript{2}University of Chinese Academy of Sciences\\
\textsuperscript{3}University of Maryland, College Park\\
\textsuperscript{4}University of Science and Technology of China\\
\textsuperscript{5}Cardiff University\\
qiaoyiling15@mails.ucas.ac.cn, gaolin@ict.ac.cn, liushuzhi16@mails.ucas.ac.cn \\ lgliu@ustc.edu.cn, LaiY4@cardiff.ac.uk, xlchen@ict.ac.cn
}

\maketitle
%\thispagestyle{empty}

%%%%%%%%% ABSTRACT
\begin{abstract}
Reflectional symmetry is ubiquitous in nature. While extrinsic reflectional symmetry can be easily parametrized and detected, intrinsic symmetry is much harder due to the high solution space. Previous works usually solve this problem by voting or sampling, which suffer from high computational cost and randomness. In this paper, we propose \YL{a} learning-based approach to  intrinsic reflectional symmetry detection. Instead of directly finding symmetric point pairs, we parametrize this self-isometry using a functional map matrix, which can be easily computed  given the signs of Laplacian eigenfunctions under the symmetric mapping. Therefore, we train a novel deep neural network to predict the sign of each eigenfunction under symmetry, which in addition takes the first few eigenfunctions as intrinsic features to characterize the mesh while avoiding coping with the connectivity explicitly. Our network aims at learning the global property of functions, and consequently converts the problem defined on the manifold to the functional domain. By disentangling the prediction of the matrix into separated basis, our method generalizes well to new shapes and is invariant under perturbation of eigenfunctions. Through extensive experiments, we demonstrate the robustness of our method in challenging cases, including different topology and incomplete shapes with holes. By avoiding random sampling, our learning-based algorithm is over 100 times faster than state-of-the-art methods, and meanwhile, is more robust, achieving higher correspondence accuracy in commonly used metrics.
\end{abstract}

%%%%%%%%% BODY TEXT

\section{Introduction}
\epigraph{What is it indeed that gives us the feeling of elegance in a solution, in a demonstration? It is the harmony of the diverse parts, their symmetry, their happy balance.}{\textit{Henri Poincar\'e}}

Symmetry is a common pattern that appears ubiquitously in the world. Majority of living things, including humans, animals, and plants (e.g. flowers) have some form of symmetry. 
It is also a widely employed design principle in man-made objects, including buildings, furniture, vehicles, to name a few. 

Due to its wide applicability, symmetry information has been exploited in computer graphics to help address many geometry processing tasks, including shape matching~\cite{tevs2014relating}, segmentation~\cite{dessein2017symmetry}, editing~\cite{mitra2014structure}, completion~\cite{speciale2016symmetry}, and understanding~\cite{mitra2010illustrating}.
% need more citation on related work
\YL{In these application systems, symmetry detection is usually an integral component, so efficient symmetry detection, especially achieving real-time efficiency, has significant benefits, e.g., to avoid impeding real-time performance in 3D acquisition/reconstruction, and for improved user experience in interactive shape editing by reducing users' waiting time.}

In geometry processing, researchers mainly focus on symmetry in the spatial domain, including extrinsic symmetry defined in Euclidean space or intrinsic symmetry defined in non-Euclidean (manifold) space.
Extrinsic symmetry refers to shape invariance w.r.t. rigid (including reflectional)  transformations. 
Compared to extrinsic symmetry, intrinsic symmetry is more difficult to detect due to its much larger solution space, \qyl{as discussed in previous work~\cite{wang2017group,kim2010mobius,ovsjanikov2008global}}.
 
Given a shape model, intrinsic symmetry detection aims to estimate a self-homeomorphism on the manifold that preserves the geodesic distance between each point pair. Usually, the manifold is discretized as a triangle mesh, and algorithms predict a point-wise correspondence matrix to represent symmetric pairs. \YL{State-of-the-art methods for intrinsic symmetry detection is largely based on embedding the symmetry space to some lower dimensional spaces, such as M{\"o}bius transformation space~\cite{kim2010mobius}, Global Point Signature (GPS) space~\cite{ovsjanikov2008global}, or functional map space~\cite{wang2017group,nagar2018fast} and performing random sampling or voting, which suffers from high computational cost and uncertainty of results due to randomness.
}

Despite great effort, efficient and robust detection of intrinsic symmetry remains challenging. Existing state-of-the-art methods typically take several seconds or longer to analyze one shape~\cite{nagar2018fast}, and may produce unreliable results for difficult cases.
To address this, we design the first learning-based intrinsic symmetry detection method 
%to learn such an elegant pattern across different objects. 
%we propose a learning-based method 
to handle the intrinsic symmetry problem.
%We use a functional map to represent the symmetry, which describes the self-mapping in the functional space.
Like most existing works, we focus on intrinsic reflectional symmetry as it is  most common in the real world.
Learning intrinsic symmetry directly on meshes is challenging, due to their irregular connectivity, and the global nature of symmetry.
We simplify this problem when designing the deep neural network, such that it does not directly process the edges and faces of the mesh, but instead takes intrinsic features as input.

%Since the rotational symmetry is notoriously difficult to detect due to its high dimension and the eigenfunctions of rotational symmetric objects are complex, we focus on reflectional symmmetry in this paper.* 
Similar to~\cite{nagar2018fast}, given an input mesh, the symmetry mapping defined on it can be represented using a functional map, or equivalently using a functional map matrix.  Laplace-Beltrami eigenfunctions can be extracted to provide a basis for analysis. In the matrix, entries corresponding to eigenfunctions associated with non-repeating eigenvalues are determined by the sign (odd or even) of the eigenfunction after the symmetry mapping is applied. State-of-the-art work~\cite{nagar2018fast} determines the sign of the eigenfunction through random sampling. Although it is faster than previous methods, it is still slow (requiring several seconds for a typical mesh), and may not be sufficiently robust. 

To address this, we train a deep neural network to predict the sign of each eigenfunction.
%, .e., whether it would change the sign (odd) or not (even) after symmetry mapping.
We design SignNet, a deep neural network for sign prediction, that in addition to the eigenfunction to be predicted, also takes the first few Laplacian eigenfunctions as input, which effectively encode intrinsic descriptions of the mesh characteristics, while avoiding coping with mesh connectivity explicitly. 
%%%YKL some motivation is needed
To make the computation more efficient, we truncate Laplace-Beltrami eigenfunctions in the spectral domain to lower the dimension of the representation. 
After predicting the entries of the functional map matrix, we apply a post-processing to further fine-tune the results (addressing issues such as near-identical eigenvalues and slight non-isometry) and convert the functional map to one-to-one point correspondence.

The main contributions of this work are summarized as follows:
\begin{itemize}
% think more about the contribution
\item \YL{We propose the first learning-based method to detect global intrinsic reflectional symmetry of shapes. Compared to previous works, our method achieves real-time performance, much more efficient than state-of-the-art (over 100 times faster). Our method also achieves higher accuracy, and is more robust.} 
\item The intrinsic symmetry problem is formulated using a functional map. To compute the entries of the functional map matrix, we design a novel deep neural network to determine the sign of each eigenfunction. Our network that predicts the sign of individual eigenfunctions using intrinsic features is compact and generalizes well to new shapes. 
\end{itemize}

%\todo{Fig.~\ref{fig:teaser} shows some examples of detecting intrinsic symmetry for shapes with complex shape/topology. These are also significantly different from examples in the training set, demonstrating the good generalization capability of our learning-based intrinsic reflectional symmetry detection technique.}

%\YL{Fig.~\ref{fig:teaser} shows some examples of detecting intrinsic reflectional symmetry for meshes with different body shapes and poses from SCAPE~\cite{anguelov2005scape}, Handstand and Swing~\cite{vlasic2008articulated} datasets. Our method is trained on an independent shape set,  demonstrating the good generalization capability of our learning-based intrinsic reflectional symmetry detection technique.}

%\qyl{As far as we have known, we are the first to develop a learning based method for detecting intrinsic reflectional symmetry, achieving real time and more robust performance.}

\section{Related Work}
\paragraph{Intrinsic Symmetry Detection.}
Many previous works cast their attention in intrinsic symmetry detection tasks. %more cite
Ovsjanikov et al.~\cite{ovsjanikov2008global} formulate the concept of intrinsic symmetry. They propose to use the Global Point Signature (GPS)~\cite{rustamov2007laplace} to transform the intrinsic symmetry of shapes into the Euclidean symmetry in the signature embedding space. The symmetry is detected by first deciding the sign sequence of eigenfunctions and then finding the nearest neighbors of the GPS of points. %This method requires much more time than ours, and the    sign flip and eigenfunction ordering
%\todo{*flip signs}
Xu et al.~\cite{xu2009partial,xu2012multi}  extend the concept of intrinsic symmetry and introduce partial symmetry where some parts of an object are symmetric. In this paper, we focus on global intrinsic symmetry due to its wide applicability, as most research in this area does.

To address the large solution space, some works parametrize intrinsic symmetry to some lower dimensional space.
A highly related problem is investigated by Mitra et al.~\cite{mitra2007symmetrization} who propose a method to symmetrize imperfectly symmetric objects. They find intrinsically symmetric point pairs by voting, and then parametrize possible transformations in a canonical space and optimize the transformation to align symmetric pairs.  
Kim et al.~\cite{kim2010mobius} use another parametrization of symmetry transformations. They find a set of symmetric points by detecting critical points of the Average Geodesic Distance 
(AGD) 
function, and generate candidate anti-M{\"o}bius transformations that can describe the symmetric transformation by enumerating subsets of the points. As a voting-based method, the running-time could be an issue. Also, the use of anti-M{\"o}bius transformation limits the method to handle genus-zero manifolds. 
Lipman et al.~\cite{lipman2010symmetry} detect symmetry by finding the orbit of points under symmetric transformations. A fuzzy point-wise symmetry correspondence matrix is generated randomly, based on which they further compute a Symmetry Factored Embedding (SFE) and Symmetry Factored Distance (SFD). However, the computation of the correspondence matrix is very time-consuming. 

The relationship between symmetry groups and matrices is studied in~\cite{lipman2010symmetry}. Similarly, Wang et al.~\cite{wang2017group} establish a homeomorphism between the symmetry group and the multiplication group of matrices. They introduce the functional map to parametrize the symmetry and limit the search space of matrix entries to the subspace of eigenfunctions. However, due to the noise in manifolds and errors during numerical calculation, eigenvalues which are ideally identical are usually calculated as different values in practice, making it difficult to determine true subspaces and resulting in poor symmetry detection. As described in \cite{wang2017group} the continuity and sparsity make functional maps a suitable representation for correspondence problems, including intrinsic symmetry. Functional maps are also used in the work~\cite{nagar2018fast}. As also mentioned in~\cite{ovsjanikov2008global}, eigenfunctions are invariant under self-isometry, apart from sign ambiguity,
and the diagonal entries of the functional map matrix are related to the sign of corresponding eigenfunctions. To decide the signs, landmark symmetric point pairs and the geodesic lines connecting them are selected. Nagar and Raman~\cite{nagar2018fast} design an explicit solution to this problem, but since their method depends on the landmark pairs, the random sampling requires a trade-off between robustness and computation complexity. 
%Compared to their method, our method is based on learning which would run faster and circumvent randomness of sampling. 
Compared to state-of-the-art methods, our learning-based method avoids explicit sampling and  is much faster (over 100 times faster for a typical example), achieving real-time performance. It circumvents the randomness of sampling, and is thus more robust and accurate.

\paragraph{Shape Analysis with Deep Learning.}
Our method learns the properties of eigenfunctions defined on manifolds using deep neural networks. 
We review 
%some previous works also study how to define network on 3D shapes. 
research that defines neural networks on 3D shapes.
With increasing requirements of faster and better analysis of 3D geometry, 
%many works nowadays 
%are dedicated to 
recent works exploit learning on shapes with deep learning.
Boscaini et al.~\cite{boscaini2016learning} design an anisotropic convolutional neural network to learn correspondences across shapes. Masci et al.~\cite{masci2015geodesic} also design a network in the spatial domain. 

Alternatively, 
another category of work constructs neural networks in the spectral domain. Bruna et al.~\cite{2013spectral} introduce a spectral convolutional layer on graphs, which can be viewed as a general form of meshes. As described in \cite{bronstein2017geometric}, a fundamental problem of spectral convolution is its dependency on the basis, making it difficult to be generalized  to different domains. To mitigate this, Yi et al.~\cite{yi2017syncspeccnn} propose a network architecture to synchronize the spectral domains and then perform convolutional operations on it. 
Rodol{\`a} et al.~\cite{Rodol2017deep} design a fully connected network to learn features that can generate functional map matrices; however, fully connected networks may suffer from overfitting, and their method requires point-wise correspondences to train the model which is not required by our method.

In this paper, we aim to detect intrinsic symmetries for general shapes, where the topology and triangulation may vary significantly. We therefore prefer a network architecture that can run robustly in cross-domain settings. To circumvent irregular connectivity of meshes, we take as input intrinsic geometric features defined on mesh vertices, namely Laplacian eigenfunctions, which implicitly carry connectivity information, but avoid coping with complex mesh connectivity. Our method thus handles general mesh topology and has good generalizability.

%directly compute on the vertices so the input to our network is essentially a point set. 
%Qi et al.~\cite{qi2017pointnet} propose PointNet to learn on point cloud data, and PointNet++~\cite{qi2017pointnet++} further adds pooling layers to improve its learning capacity. Unlike these works, our network does not take extrinsic features like point (vertex) positions as input. This is because our network is designed to detect the sign of eigenfunctions, which is intrinsic, and invariant under extrinsic transformations.
%Therefore, we instead derive input features from  eigenfunctions, such that they implicitly carry connectivity information, regardless that it is defined on an unordered point set.

\begin{figure*}[t]
	\begin{center}
		\includegraphics[width=\linewidth]{./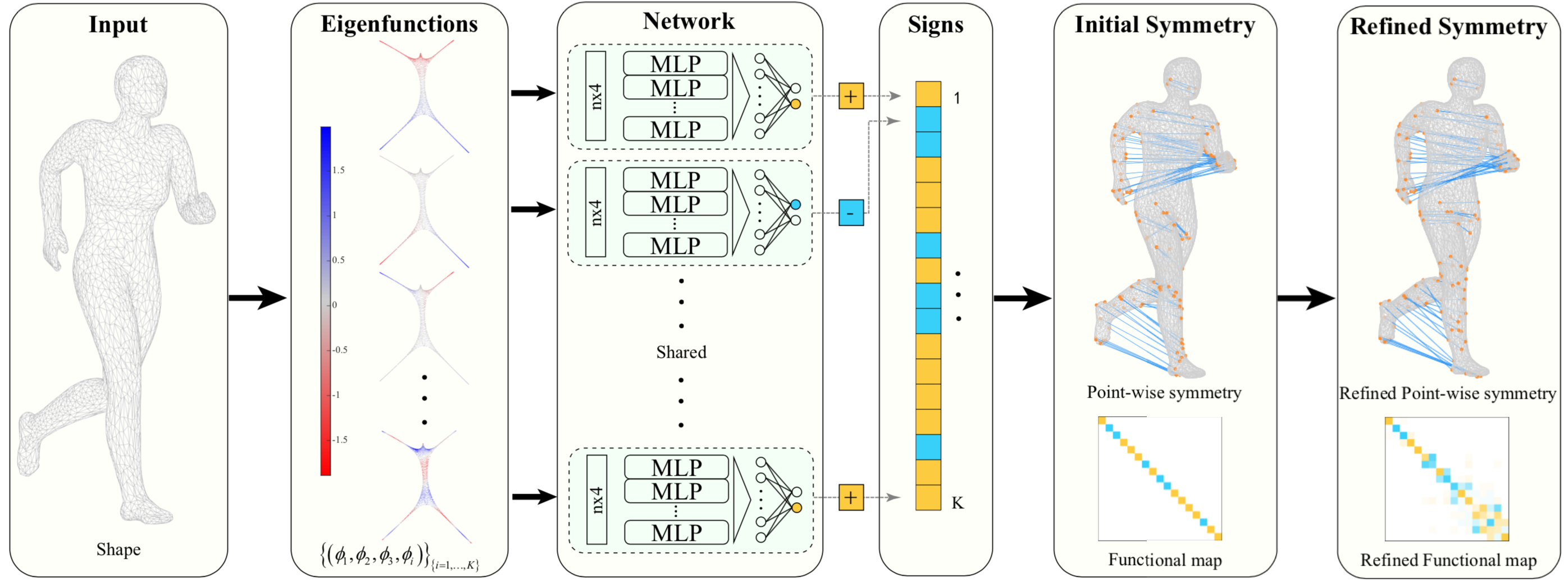}
	\end{center}
	\caption{Pipeline of our method. Given a triangle mesh of a shape, our method can predict point-wise intrinsically symmetric correspondences. First, we compute the Laplace-Beltrami eigenvectors of the mesh. Then we retain the eigenvectors associated with the first $K$ smallest eigenvalues (excluding eigenvalue 0) for analysis. We train a neural network called SignNet, that predicts the sign of each eigenfunction $\bm{\phi}_i$ ($i=1, 2, \dots, K$), under reflectional symmetry transformation $T$, i.e.,  whether $\bm{\phi}_i\circ T=\bm{\phi}_i$ or $\bm{\phi}_i\circ T=-\bm{\phi}_i$. 
 	Instead of feeding the extrinsic positions as input, we use as input the first $t$ eigenvectors along with the $i$-th eigenfunction $\bm{\phi}_i$, which are invariant under isometric transformation. The second box visualizes the shapes using the first three eigenvectors $(\bm{\phi}_1,\bm{\phi}_2,\bm{\phi}_3)$ as the coordinates and $\bm{\phi}_i$ as the color (where blue is small and red is large). The output is a two-dimensional vector, indicating the sign. We later combine $K$ signs and convert them to a diagonal functional map matrix whose entries $c_{ii}$ are either $+1$ or $-1$. This induces an initial point-wise symmetry map. To correct errors led by repeating eigenvalues and imperfect symmetry, we refine the diagonal matrix and obtain a better symmetry mapping as shown in the last box through post-processing.  }
	\label{fig:netstruct}
\end{figure*}

\section{Representing Intrinsic Symmetry by Functional Maps}\label{sec:fm}

To cope with discrete and high-dimensional point-wise correspondence matrices, we use  functional maps to represent the self-mapping.
The functional map was introduced in \cite{ovsjanikov2012functional}, first used to describe the correspondences between two shapes. In our problem, a self-isometry $T$ can also be viewed as a mapping between two identical shapes $\mathcal{M}$. And the mapping $T:\mathcal{M}\rightarrow\mathcal{M}$ can naturally introduce a bijective transformation $T_f\in GL(L^2(\mathcal{M}))$ in the square-integrable space $L^2(\mathcal{M})$, such that
\begin{equation}
	\forall f\in L^2(\mathcal{M}),m\in\mathcal{M},T_f(f)(m)=f(T(m)).
\end{equation}

Three remarks are presented in \cite{ovsjanikov2012functional} w.r.t. \textit{functional map} $T_f$, which are summarized below. %For further study of the functional map, we would also write the proof for the third remark.
\begin{prop}
The original self-isometry $T$ can be recovered from $T_f$.
\end{prop}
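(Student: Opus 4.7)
The plan is to show that $T$ is pointwise uniquely determined by $T_f$. The key identity is the defining one, $T_f(f)(m) = f(T(m))$: evaluating the functional-map image of $f$ at a point $m$ reads off the value of $f$ at $T(m)$. Consequently, if we have a separating family of functions on $\mathcal{M}$, then for every $m$ we can extract enough scalars to identify $T(m)$ uniquely.

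First I would treat the discrete setting that actually appears in our pipeline. Let $\{v_1,\dots,v_n\}$ be the mesh vertices and $\{\chi_{v_j}\}_{j=1}^n$ the indicator basis of $L^2(\mathcal{M})$. Applying $T_f$ gives $T_f(\chi_{v_j})(v_i) = \chi_{v_j}(T(v_i))$, which equals $1$ precisely when $T(v_i) = v_j$ and $0$ otherwise. Hence the matrix of $T_f$ in the indicator basis is exactly the permutation matrix encoding $T$, and the original correspondence can be read off directly.

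For the continuous statement I would instead take any embedding $\iota: \mathcal{M} \hookrightarrow \mathbb{R}^N$ whose coordinate functions $x_1,\dots,x_N$ lie in $L^2(\mathcal{M})$. Applying $T_f$ componentwise yields $(T_f(x_k)(m))_{k=1}^N = \iota(T(m))$, so injectivity of $\iota$ recovers $T(m)$ uniquely. Alternatively one can approximate point evaluation by characteristic functions of shrinking neighborhoods of $m$, using that $T$ is a measurable bijection.

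The main obstacle is this $L^2$ regularity issue in the continuous setting, since indicator functions of single points do not lie in $L^2(\mathcal{M})$; the discrete case used in our pipeline is immediate. Bijectivity of $T$, inherited from its being an isometry, then ensures that the pointwise reconstruction is well-defined and agrees with $T$ everywhere on $\mathcal{M}$.
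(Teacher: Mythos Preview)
The paper does not actually prove this proposition; it is listed as one of three remarks ``summarized'' from the original functional maps paper of Ovsjanikov et al.~\cite{ovsjanikov2012functional}, with no accompanying argument here. Your proof is correct and in fact mirrors the standard justification given in that reference: one recovers $T$ by applying $T_f$ to indicator (delta) functions, since $T_f(\chi_a)$ is the indicator of $T^{-1}(a)$. In the discrete setting this is exactly your permutation-matrix argument. Your alternative via coordinate functions of an embedding is a slight variation but equally valid, and arguably cleaner in sidestepping the $L^2$ regularity issue you flag for point masses.
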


\begin{prop}
For each intrinsic symmetry $T$ on $\mathcal{M}$, $T_f$ is a linear transformation on $L^2(\mathcal{M})$.
\end{prop}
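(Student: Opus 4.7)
The plan is to verify linearity directly from the pointwise definition of $T_f$, and then check that $T_f$ actually maps $L^2(\mathcal{M})$ into itself, so that it is a genuine linear transformation of the stated space rather than merely of all functions on $\mathcal{M}$.

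First I would fix scalars $a,b\in\mathbb{R}$ (or $\mathbb{C}$) and functions $f,g\in L^2(\mathcal{M})$, and evaluate $T_f(af+bg)$ at an arbitrary point $m\in\mathcal{M}$. By the defining identity $T_f(h)(m)=h(T(m))$ applied to $h=af+bg$, we get
\begin{equation}
T_f(af+bg)(m)=(af+bg)(T(m))=af(T(m))+bg(T(m)),
\end{equation}
where the last equality is just the pointwise definition of scalar multiplication and addition of functions on $\mathcal{M}$. Recognizing the two summands as $aT_f(f)(m)$ and $bT_f(g)(m)$ respectively, and since $m$ was arbitrary, this yields the operator-level identity $T_f(af+bg)=aT_f(f)+bT_f(g)$, which is exactly linearity.

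Next I would confirm that $T_f$ really takes $L^2(\mathcal{M})$ into $L^2(\mathcal{M})$, so that the statement is meaningful. Because $T$ is an intrinsic symmetry, it is a self-isometry of $\mathcal{M}$ and therefore preserves the Riemannian volume measure $d\mu$. Hence by a change of variables,
\begin{equation}
\int_{\mathcal{M}}|T_f(f)(m)|^{2}\,d\mu(m)=\int_{\mathcal{M}}|f(T(m))|^{2}\,d\mu(m)=\int_{\mathcal{M}}|f(m)|^{2}\,d\mu(m),
\end{equation}
so $T_f(f)\in L^2(\mathcal{M})$ with the same norm as $f$. This also shows $T_f$ is bounded (in fact an isometry of Hilbert spaces), which is a small bonus beyond linearity.

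The argument is essentially mechanical: the algebraic linearity step is immediate from the pointwise definition, and the only subtle point — and therefore the one I would expect to merit the most care — is justifying that $T_f$ preserves square-integrability, which requires invoking the measure-preserving property of the isometry $T$ rather than just its bijectivity. Once that is in place, linearity of $T_f$ as an operator on $L^2(\mathcal{M})$ follows, completing the proof.
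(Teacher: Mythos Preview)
Your proof is correct. The paper itself does not prove this proposition at all: it presents Propositions~3.1--3.3 as ``remarks'' summarized from \cite{ovsjanikov2012functional} without any accompanying argument, so your direct verification of pointwise linearity together with the measure-preservation check (ensuring $T_f$ genuinely maps $L^2(\mathcal{M})$ to itself) is strictly more than what the paper supplies.
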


\begin{prop}
Assume that $L^2(\mathcal{M})$ is equipped with an orthogonal basis ${\{\bm{\phi}_i\}}_{i=1,2,\dots}$. For each $T$, the functional map can be represented by a matrix $\bm{C}$, with entries $c_{ij}=<T_f(\bm{\phi}_i),\bm{\phi}_j>$. For each function $f=\sum_i b_i\bm{\phi}_i\in L^2(\mathcal{M})$ with coefficient vector $\bm{b}=(b_1,b_2,\dots)$, the coefficient vector of map $T_f(f)$ of $\bm{b}$ is $\bm{C}\bm{b}$.
%see how to write proof
\end{prop}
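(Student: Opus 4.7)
The plan is to exploit the linearity of $T_f$ established in Proposition 2 together with the orthogonality of the basis $\{\bm{\phi}_i\}$ to reduce the claim to a routine change-of-basis computation in a Hilbert space. I would take the basis to be orthonormal (rescaling if necessary), so that for any $g\in L^2(\mathcal{M})$ the Fourier expansion $g=\sum_i\langle g,\bm{\phi}_i\rangle\bm{\phi}_i$ is available. Applied to $g=T_f(\bm{\phi}_i)$, this immediately yields
\begin{equation}
T_f(\bm{\phi}_i)=\sum_{j}\langle T_f(\bm{\phi}_i),\bm{\phi}_j\rangle\,\bm{\phi}_j=\sum_j c_{ij}\bm{\phi}_j,
\end{equation}
which is nothing but the defining equation of the entries $c_{ij}$ given in the statement.

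Next, I would take $f=\sum_i b_i\bm{\phi}_i\in L^2(\mathcal{M})$, apply $T_f$, and use its linearity (Proposition 2) to write
\begin{equation}
T_f(f)=\sum_i b_i\,T_f(\bm{\phi}_i)=\sum_i\sum_j b_i c_{ij}\bm{\phi}_j=\sum_j\Bigl(\sum_i c_{ij}b_i\Bigr)\bm{\phi}_j.
\end{equation}
Reading off the coefficient of $\bm{\phi}_j$ in the last expression identifies it, under the indexing convention implied by the definition $c_{ij}=\langle T_f(\bm{\phi}_i),\bm{\phi}_j\rangle$, with the $j$-th entry of the matrix--vector product $\bm{C}\bm{b}$. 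This is exactly what the proposition asserts, and along the way it also shows that the matrix $\bm{C}$ is uniquely determined by $T_f$ and the basis.

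The only genuine subtlety, and what I expect to be the main point to justify cleanly, is the rearrangement of the double series in the infinite-dimensional setting. This is not automatic for an arbitrary linear map, but here $T$ is an isometry, so $T_f$ is bounded (in fact unitary) on $L^2(\mathcal{M})$; the partial sums $\sum_{i\le N} b_i\bm{\phi}_i$ converge to $f$ in $L^2$, continuity of $T_f$ propagates this convergence to $T_f(f)$, and Parseval's identity then legitimizes extracting the $j$-th Fourier coefficient term by term. In the finite-dimensional truncation to the first $K$ eigenfunctions actually used later in the paper, this concern disappears entirely and the argument becomes purely algebraic.
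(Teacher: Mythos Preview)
Your argument is correct and is exactly the standard change-of-basis computation one would expect. There is nothing to compare it against, however: the paper does not prove this proposition at all. Propositions~3.1--3.3 are introduced with the sentence ``Three remarks are presented in~\cite{ovsjanikov2012functional} \ldots which are summarized below'' and are stated without proof; only Theorems~\ref{th:eigen} and~\ref{th:eg} receive proofs in the paper. So your write-up actually supplies what the paper merely cites.

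One small remark: with the entry convention $c_{ij}=\langle T_f(\bm{\phi}_i),\bm{\phi}_j\rangle$ and the usual reading of $c_{ij}$ as row $i$, column $j$, the $j$-th coefficient you derive is $\sum_i c_{ij}b_i=(\bm{C}^{\top}\bm{b})_j$, not $(\bm{C}\bm{b})_j$. Your phrase ``under the indexing convention implied by the definition'' papers over this. It is a well-known ambiguity in the functional-map literature, the paper itself is equally loose about it, and since every matrix actually used later in the paper is diagonal, it has no effect on anything downstream; but if you want the statement literally as written you should either treat $\bm{b}$ as a row vector acted on from the right, or swap the indices in the definition of $c_{ij}$.
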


%\begin{proof}
%The function $f\in L^2(\mathbf{M})$ can be projected to the basis as $f=\sum_i a_i\bm{\phi}_i $. Apply the transformation to f, and we obtain
%\begin{equation}
%	T_f(f)=T_f(\sum_i a_i\bm{\phi}_i)=\sum_i a_iT_f(\bm{\phi}_i).
%\end{equation}
%The transformation on the basis function can be written as $T_f(\bm{\phi}_i)=\sum_j C_{ij}\bm{\phi}_j$. When the basis functions are orthogonal, we can deduce that $C_{ij}=<T_f(\bm{\phi}_i),\bm{\phi}_j>$. Take this back to the previous equation, then it writes as
%\begin{equation}
%	T_f(f)=\sum_i a_iT_f(\bm{\phi}_i)=\sum_i a_i\sum_j c_{ji} \bm{\phi}_j.
%\end{equation}
%Now that the coefficient vector of $T_f(f)$ is $\mathbf{C}^T\mathbf{a}$.
%\end{proof}

Following the choice of~\cite{ovsjanikov2012functional}, we use the eigenfunctions of Laplace-Beltrami operator as the basis. 
For a mesh with $N$ vertices, the discrete Laplacian operator on the mesh is defined as an $N\times N$ matrix~\cite{Meyer2003}
\begin{equation}
	\bm{L}=\bm{A}^{-1}(\bm{D}-\bm{W}),
\end{equation}
where $\bm{A}=diag(a_1,\dots,a_N)$ contains vertex weights, with $a_i$ equal to the Voronoi area of the vertex (i.e., a third of the sum of one-ring neighborhood areas). %\todo{*expression}
$\bm{W}=\{w_{ij}\}_{i,j=1,\dots,N}$ is the sparse cotangent weight matrix, $\bm{D}$ is the degree matrix which is a diagonal matrix with diagonal entries $d_{ii}=\sum_{j=1}^{N}{w_{ij}}$. 

The aforementioned eigenfunction basis $\bm{\bm{\phi}}=\{\bm{\phi}_i\}_{i=1,2,\dots,N}$ are the solution of
$\bm{L}\bm{\bm{\phi}}=\bm{\Lambda}\bm{\bm{\phi}}$, where $\bm{\Lambda}$ is a diagonal matrix whose diagonal entries are eigenvalues in ascending order, $\lambda_0\leq\lambda_1\leq \dots \lambda_N$.
%\todo{*}
For efficiency and robustness, we take the eigenfunctions corresponding to the first $K$ smallest eigenvalues ($K << N$). Note $\lambda_0 = 0$ and the corresponding trivial eigenfunction is ignored.
%truncate, 1-ring neighbor, head note of c

\section{Method}
\subsection{Overview}
Our goal is to detect the intrinsic symmetry of shapes. An intrinsic symmetry is the self-homeomorphism of a smooth surface $\mathcal{M}$, written as $T:\mathcal{M}\rightarrow\mathcal{M}$, which preserves geodesic distances $d_g$ 
\begin{equation}\label{eq:intrsdef}
\forall m,n,T(m),T(n)\in \mathcal{M},d_g(m,n)=d_g(T(m),T(n)).
\end{equation}
Instead of directly computing a point-wise correspondence matrix, we use a functional map to describe this self-mapping.
The functional map defined on the Laplacian basis is represented as a matrix, which is the coordinate transformation matrix w.r.t. the source and target bases.
%\todo{*}
Since the Laplace-Beltrami operator is invariant under isometric transformation, the eigenfunction space stays invariant under self-mapping. Therefore, the matrix $C$ corresponding to the self-mapping $T$ is a block diagonal matrix. More specifically, only one of the two cases holds for eigenfunction $\bm{\phi}_i$ associated with non-repeating eigenvalues (see also in \cite{ovsjanikov2008global}):

\begin{itemize}
	\item $\bm{\phi}_i\circ T=\bm{\phi}_i$, where $\bm{\phi}_i$ is called 
	\textit{positive}.
	\item $\bm{\phi}_i\circ T=-\bm{\phi}_i$, where $\bm{\phi}_i$ is called \textit{negative}.
\end{itemize}

Therefore, the entry $c_{ii}$ in the matrix  corresponding to each non-repeating eigenfunction $\bm{\phi}_i$ should be either +1 or -1, depending on whether $\bm{\phi}_i$ is positive or negative. 
Fig.~\ref{fig:netstruct} shows the pipeline of our method. %*write more
We train a network called SignNet to distinguish the sign of eigenfunctions under reflectional symmetry. To provide sufficient guidance, we train the network in a supervised fashion. 
%The ground-truth signs for eigenfunctions of shapes in the training set are first computed using a state-of-the-art algorithm~\cite{nagar2018fast} and then manually checked and corrected. 
Given an input shape, once the signs of Laplacian eigenfunctions are predicted using our SignNet, 
we can fill in the diagonal of the initial functional map matrix $\tilde{C}$ with +1 and -1. However, in most of the time the intrinsic symmetry is imperfect, where some areas experience non-isometric deformation. Moreover, there could also be eigenfunction spaces associated with repeating eigenvalues, in which condition the diagonal matrix cannot fully express the mapping. Therefore we use a postprocessing step to fine-tune the initial matrix $\tilde{C}$ to obtain the final matrix $C$.

\begin{figure*}[t]
	\begin{center}
		\includegraphics[width=\linewidth]{./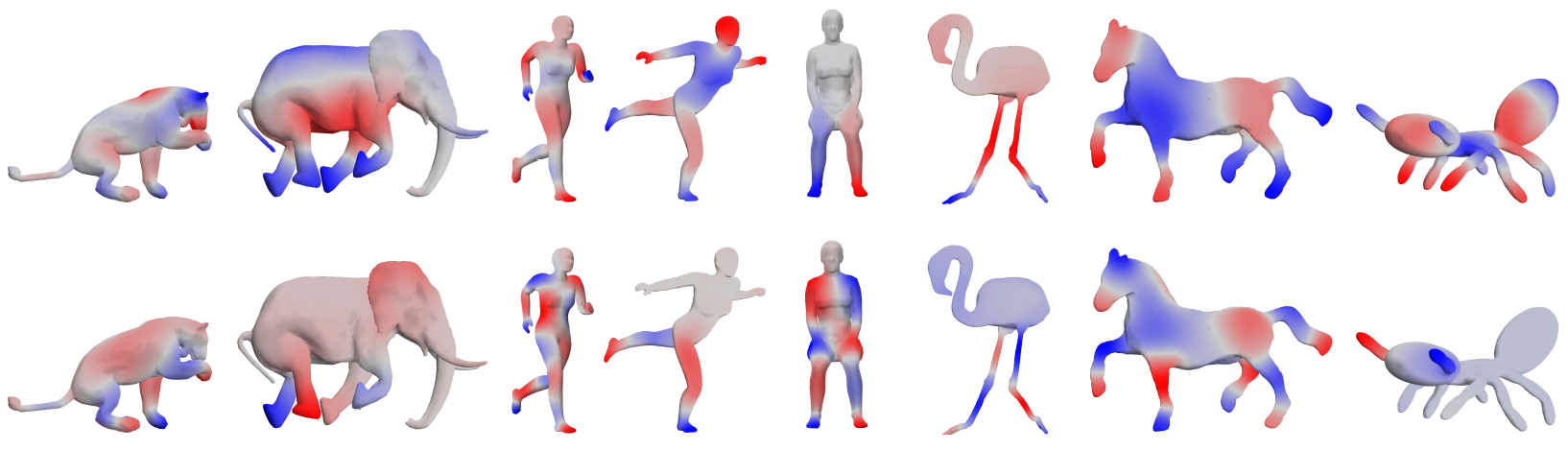}
	\end{center}
	\caption{The Laplace-Beltrami eigenfunction maps on shapes. In this figure, we visualize the eigenfunctions on shapes from SHREC 2007~\cite{giorgi2007shrec}, elephant, and flamingo~\cite{sumner2004deformation}. Red vertices represent positive values of eigenfunctions and blue values are negative values. First row shows eigenfunctions where $\bm{\phi}_i\circ T = \bm{\phi}_i$ (i.e. positive cases), and the second row presents negative cases, satisfying $\bm{\phi}_i\circ T=-\bm{\phi}_i$. Obvious symmetric/asymmetric patterns can be observed in these non-repeating eigenfunctions. We further develop a deep neural network to predict the signs.  }
	\label{fig:featuremap}
\end{figure*}

\subsection{Learning Intrinsic Symmetry}

\paragraph{Diagonal entries of the functional map matrix.}

As described in Section~\ref{sec:fm}, we detect intrinsic symmetry by computing the functional map matrix $C$. Although the dimension of $K\times K$ functional map matrix $C$ is already much lower than the $N\times N$ point-wise correspondence matrix, predicting the full $K\times K$ matrix is still challenging for optimization methods or deep networks since there are still too many variables.
%variables is still too big either for optimization methods or for deep networks to manipulate. 
We further utilize the sparse structure of the symmetry functional map to make it much easier to predict the mapping.

First we need to clarify an important property about eigenfunctions under a symmetry mapping. The Laplacian eigenfunctions associated with non-repeating eigenvalues are invariant under intrinsic symmetry mapping, only with sign ambiguity. 
%Here are the formulation and proof of this property.
This property is formally presented as follows:
\begin{theorem}\label{th:eigen}
 For an intrinsic mapping $T$ defined in Equation~(\ref{eq:intrsdef}) and a Laplacian eigenfunction $\bm{\phi}_i$ associated with a non-repeating eigenvalue $\lambda_i$, $\bm{\phi}_i\circ T=\pm \bm{\phi}_i$. 
 \end{theorem}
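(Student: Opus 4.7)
The plan is to exploit the intrinsic nature of the Laplace-Beltrami operator and the fact that $T$ is a metric isometry. I would prove the theorem in three steps: (i) show the pullback $\bm{\phi}_i \circ T$ lies in the same eigenspace as $\bm{\phi}_i$; (ii) use the non-repeating hypothesis to reduce this to a scalar relation $\bm{\phi}_i \circ T = c\,\bm{\phi}_i$; (iii) pin down $c = \pm 1$.

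For step (i), the key fact is that the Laplace-Beltrami operator $\Delta$ is defined purely from the Riemannian metric. A mapping that preserves all pairwise geodesic distances (as in Equation~(\ref{eq:intrsdef})) is a Riemannian isometry, and therefore commutes with $\Delta$ in the sense that $\Delta(f \circ T) = (\Delta f) \circ T$ for all sufficiently smooth $f$. Applied to $\bm{\phi}_i$ with $\Delta \bm{\phi}_i = \lambda_i \bm{\phi}_i$, this yields $\Delta(\bm{\phi}_i \circ T) = \lambda_i\,(\bm{\phi}_i \circ T)$, so $\bm{\phi}_i \circ T$ is again an eigenfunction with eigenvalue $\lambda_i$.

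For step (ii), the non-repeating assumption means the eigenspace associated with $\lambda_i$ is one-dimensional, hence spanned by $\bm{\phi}_i$. Consequently $\bm{\phi}_i \circ T = c\,\bm{\phi}_i$ for some scalar $c \in \mathbb{R}$ (real because $\bm{\phi}_i$ is real-valued). For step (iii), I would give two independent arguments, either of which suffices. First, because $T$ is an isometry it preserves the $L^2$ inner product on $\mathcal{M}$, so $\|\bm{\phi}_i \circ T\|_{L^2} = \|\bm{\phi}_i\|_{L^2}$, forcing $c^2 = 1$. Alternatively, reflectional symmetry is involutive ($T \circ T = \mathrm{id}$), so applying the pullback twice gives $c^2\,\bm{\phi}_i = \bm{\phi}_i$, again yielding $c = \pm 1$. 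Either conclusion establishes $\bm{\phi}_i \circ T = \pm \bm{\phi}_i$, as claimed.

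The main obstacle I anticipate is the step that turns the distance-preserving hypothesis into the commutation relation $\Delta \circ T^* = T^* \circ \Delta$; this rests on the Myers-Steenrod type fact that a distance-preserving homeomorphism between smooth Riemannian manifolds is a smooth isometry, from which intrinsic invariance of $\Delta$ follows. In the discrete/mesh setting one only has an approximate isometry, but for the statement of Theorem~\ref{th:eigen} — formulated on the smooth manifold $\mathcal{M}$ with geodesic distance $d_g$ — the smooth argument applies cleanly. The remaining steps (dimensionality of eigenspaces, norm preservation, involutivity) are essentially bookkeeping.
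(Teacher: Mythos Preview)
Your proposal is correct and follows essentially the same route as the paper: invoke the isometry-invariance of the Laplace--Beltrami operator to show $\bm{\phi}_i\circ T$ lies in the $\lambda_i$-eigenspace, then use the non-repeating hypothesis to reduce to a scalar multiple, and finally use isometry to pin the scalar to $\pm 1$. If anything, your version is more careful---the paper compresses your steps (ii) and (iii) into the single clause ``given that $\lambda_i$ is non-repeating and $T$ is isometric,'' whereas you explicitly articulate the one-dimensionality and the norm-preservation (or involution) argument.
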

\begin{proof} 
 As a well-known property of Laplace-Beltrami operator, the operator $L:L^2(\mathcal{M})\rightarrow L^2(\mathcal{M})$ is invariant under isometric transformation $T$, i.e. %$T_f:L^2(\mathcal{M})\rightarrow L^2(\mathcal{M})$ 
\begin{equation}
    \forall f\in L^2(\mathcal{M}), T_f L(f)=L T_f(f),
\end{equation}
where $T_f:L^2(\mathcal{M})\rightarrow L^2(\mathcal{M})$ is the transformation on $L^2(\mathcal{M})$ introduced by $L$. 

Let $f=\bm{\phi}_i$, we can then obtain $L (T_f\bm{\phi}_i)=T_f L\bm{\phi}_i =\lambda_i T_f\bm{\phi}_i$, which means $T_f\bm{\phi}_i$ is also an eigenfunction with $\lambda_i$ as its eigenvalue. We have $L\bm{\phi}_i=\lambda_i\bm{\phi}_i$, so $T_f(\bm{\phi}_i)=\bm{\phi}_i\circ T$ is in the same eigenfunction space as $\bm{\phi}_i$. Given that $\lambda_i$ is non-repeating and $T$ is isometric, then we have $\bm{\phi}_i\circ T=\pm \bm{\phi}_i$.  %\todo{*why +-1}
 \end{proof}
From the proof of Theorem~\ref{th:eigen}, we can know that $\bm{\phi}_i\circ T$ and $\bm{\phi}_i$ are in the same eigenfunction space. In particular, if this eigenvalue is non-repeating, we can denote $\bm{\phi}_i\circ T=s_i \bm{\phi}_i$, where $s_i=<T_f(\bm{\phi}_i),\bm{\phi}_i> \in \{-1, +1\}$ . 

Based on this property of eigenfunctions, we further exploit the relationship between $\bm{\phi}_i$ and the functional map matrix $C$.

\begin{theorem} \label{th:eg}
If all eigenvalues are non-repeating, then $c_{ij}=s_i$, if $i=j$, or $0$ otherwise. 
 \end{theorem}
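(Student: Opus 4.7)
The plan is to derive this directly from Theorem~\ref{th:eigen} together with the definition of the functional map matrix given in Proposition~3 and the orthogonality of the Laplace–Beltrami eigenbasis. Since every eigenvalue is assumed non-repeating, Theorem~\ref{th:eigen} gives, for every index $i$, an explicit sign $s_i\in\{-1,+1\}$ such that $\bm{\phi}_i\circ T=s_i\,\bm{\phi}_i$, i.e.\ $T_f(\bm{\phi}_i)=s_i\,\bm{\phi}_i$. That is essentially the only analytic input needed; everything else is linear algebra on the basis coefficients.

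First I would recall that $\{\bm{\phi}_i\}$ is an orthonormal basis of $L^2(\mathcal{M})$. This is standard: the Laplace–Beltrami operator is self-adjoint with respect to the area-weighted inner product, so eigenfunctions belonging to distinct eigenvalues are automatically orthogonal, and since each eigenvalue is non-repeating each eigenspace is one-dimensional and its generator can be normalized. Thus $\langle \bm{\phi}_i,\bm{\phi}_j\rangle=\delta_{ij}$.

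Next I would plug $T_f(\bm{\phi}_i)=s_i\,\bm{\phi}_i$ into the entrywise formula $c_{ij}=\langle T_f(\bm{\phi}_i),\bm{\phi}_j\rangle$ of Proposition~3, pulling the scalar $s_i$ out of the inner product. This gives
\begin{equation}
c_{ij}=\langle s_i\,\bm{\phi}_i,\bm{\phi}_j\rangle=s_i\,\langle\bm{\phi}_i,\bm{\phi}_j\rangle=s_i\,\delta_{ij},
\end{equation}
which is exactly the claim: $c_{ii}=s_i$ and $c_{ij}=0$ for $i\neq j$.

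There is no real obstacle here; the statement is essentially a corollary of Theorem~\ref{th:eigen}. The only subtlety worth a sentence is justifying the orthonormality of the basis in the discrete/continuous setting used by the paper (self-adjointness of $L$ with respect to the area-weighted inner product induced by $\bm{A}$, and normalization of each one-dimensional eigenspace). Once that is said, the proof reduces to a one-line computation, and the conclusion that $C$ is diagonal with $\pm 1$ entries under the non-repeating-eigenvalue hypothesis is immediate.
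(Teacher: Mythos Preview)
Your proof is correct and follows essentially the same approach as the paper: both invoke Theorem~\ref{th:eigen} to write $T_f(\bm{\phi}_i)=s_i\bm{\phi}_i$, then use the definition $c_{ij}=\langle T_f(\bm{\phi}_i),\bm{\phi}_j\rangle$ together with orthogonality of eigenfunctions belonging to distinct eigenvalues. Your presentation is slightly cleaner in that it packages the computation into the single identity $c_{ij}=s_i\delta_{ij}$, whereas the paper treats the diagonal and off-diagonal cases separately, but the underlying argument is identical.
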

 
\begin{proof} 
 $c_{ij}=\ <T_f(\bm{\phi}_i),\bm{\phi}_j>$. If $i=j$, as defined in Theorem~\ref{th:eigen}, $c_{ii}=s_i$; if $i\neq j$, since $\lambda_i\neq \lambda_j$, then $c_{ij}= \ <T_f(\bm{\phi}_i), \bm{\phi}_i>\ =\ <T_f(\bm{\phi}_i), \bm{\phi}_j>\ = 0$. 
 \end{proof}
 
\qyl{Theorem~\ref{th:eg} shows that} $\bm{C}$ is a block diagonal matrix, where the  entry associated with the non-repeating eigenvalue $\lambda_i$ is $s_i$.

\paragraph{Predicting the sign of eigenfunctions.}

So the problem is much simplified and disentangled, such that we can derive the whole matrix by separately considering the sign of each eigenfunction. The visualization of the eigenfunctions on shapes is shown in Fig.~\ref{fig:featuremap}. In this illustration, red areas represent positive values and blue areas are negative values. The first row shows eigenfunctions that satisfy $\bm{\phi}_i\circ T=\bm{\phi}_i$ (i.e. positive cases), and the second row includes shapes associated with a negative eigenfunction. From the figure it can be seen that symmetric patterns are rather obvious: positive functions appear symmetric under reflectional symmetry, while negative ones are asymmetric. Nagar and Raman~\cite{nagar2018fast} propose a sampling-based method to decide the sign of the function. However, this approach depends on random samples, which takes a long time to compute and may occasionally fail. In this paper, we propose to train a neural network to learn the sign of eigenfunctions.

Fig.~\ref{fig:netstruct} illustrates the pipeline of our method. Given an input shape, we first compute its Laplacian matrix and the first $K$ eigenfunctions (excluding the trivial eigenfunction associated with eigenvalue $0$). Instead of taking the whole shape along with the eigenfunctions as input, which requires the neural network to deal with irregular mesh connectivity, our neural network (SignNet) processes each eigenfunction $\bm{\phi}_i$ \emph{separately}. Assuming the $i$-th eigenfunction $\bm{\phi}_i$ is being processed,  the input to the network includes not only $\bm{\phi}_i$, but also the first $t$ eigenfunctions $\bm{\phi}_1,\bm{\phi}_2, \dots, \bm{\phi}_t$, which capture the characteristics of the input mesh and are also intrinsic.

The output of SignNet is a 2-dimensional softmax vector. The distributions of the eigenfunctions on the mesh can reflect the pattern of the sign to a great extent. Here we do not use the original positions of vertices as input since they are extrinsic features. In contrast, the first $t$ dimensions of Laplacian eigenvectors are intrinsic, thus more suitable for detecting intrinsic symmetry. 

To visualize this, in Fig.~\ref{fig:netstruct}(b), %update fig 2 with abcd later
we plot the embedding of vertices taking the first three eigenfunctions $(\bm{\phi}_1(p),\bm{\phi}_2(p),\bm{\phi}_3(p))$ evaluated at vertex $p$ as vertex coordinates and $\bm{\phi}_i(p)$ as the color (blue to red means small value to large value). It can be observed that the shapes of the embedding are extrinsically symmetric even if the mesh is only intrinsically symmetric. Also, we can see that those eigenfunctions are either symmetric or asymmetric, corresponding to positive or negative eigenfunctions.

In the SignNet neural network, we use Multi-Layer Perceptrons (MLPs) to extract vertex features with increasing complexity. Then a max-pooling is applied on all vertices to aggregate global features. Following the pooling layers are several fully-connected layers with decreasing numbers of channels. In the end, the network predicts a two-dimensional score vector $\bm{v}_i = (v_{i,1}, v_{i,2})$, i.e.
\begin{equation}
\bm{v}_i=SignNet(\bm{\phi}_1,\bm{\phi}_2, \dots, \bm{\phi}_t; \bm{\phi}_i),
\end{equation}
such that the sign is predicted to be negative if $\argmax_k v_{i,k}=1$, or positive if $\argmax_k v_{i,k}=2$, for $k=1, 2$. Let 
$\bm{\hat{s}}_i$ be a two-dimensional vector, 
$\bm{\hat{s}_i} = (1, 0)$ if $s_i = -1$, and $\bm{\hat{s}_i} = (0, 1)$ if $s_i = 1$.
The loss function is designed as the \textit{CorssEntropy} between $\bm{v}_i$ and ground-truth sign label $\bm{\hat{s}}_i$, formulated as
\begin{equation}
Loss = CrossEntropy(\bm{v}_i,\bm{\hat{s}}_i).
\end{equation}
%The ground truth is obtained by first applying a state-of-the-art method \cite{nagar2018fast} to shapes in the training set, and then manually removing shapes that 

\YL{
\subsection{Training Data}\label{sec:training}
Our learning-based approach requires a dataset for training. 
For this purpose, we choose as training set a fusion of sets SHREC 2007~\cite{giorgi2007shrec}, elephant, and flamingo~\cite{sumner2004deformation}, which contains non-rigidly deformed shapes which are intrinsically symmetric. 
As a shape retrieval dataset, SHREC dataset includes shapes of  different categories. 
Meanwhile, they are also independent from the test sets (SCAPE~\cite{anguelov2005scape} and TOSCA~\cite{bronstein2008numerical}). This ensures fairness and evaluates the generalizability of our learning-based approach. 

We built a simple user interface to visualize and manually label each Laplacian eigenfunction as either positive, negative or neither, under reflectional symmetry transform.
Neither cases happen for shapes which are not intrinsically reflectional symmetry, or for eigenfunctions with repeating eigenvalues, as shown in Fig.~\ref{fig:nosign}. These are excluded from our training dataset.
The dataset will be released to the community to facilitate future research.
}

\begin{figure}[t]
	\begin{center}
		\includegraphics[width=\linewidth]{./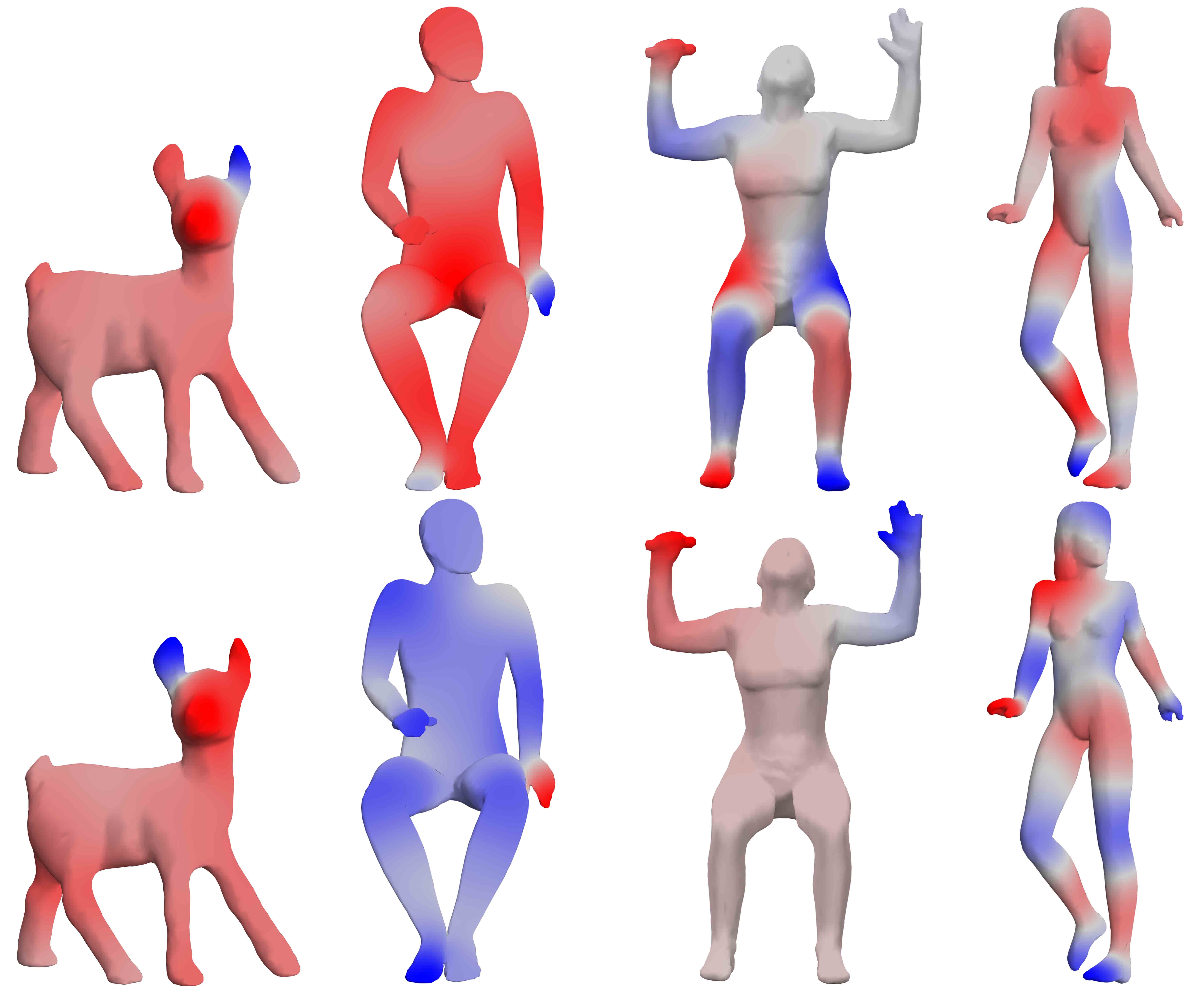}
	\end{center}
	\caption{Eigenfunctions which are neither positive nor negative. In this figure, we visualize eigenfunctions associated with repeating eigenvalues, such that $\bm{\phi}_i\circ T\neq \pm\bm{\phi}_i$, where $T$ is the intrinsic reflectional symmetry mapping.}
	\label{fig:nosign}
\end{figure}

\subsection{Network Architecture}

%%%YKL Add a figure TODO
In our SignNet, the input placeholder is set to work with 4500 points, which are padded with 0 if the mesh has less than 4500 vertices, and for meshes with more than 4500 vertices,  they are downsampled to 4500 points. In the network, we use multi-layer perceptrons (MLP), max-pooling layers, and fully connected layers. There are five MLP layers, having 64, 128, 256, 512, 4096 channels respectively, and there are ReLU activation layers and batch normalization layers right after the output of each MLP layer. Then we use a max-pooling layer to aggregate the global features. Such a combination of shared-weight MLP and max-pooling layers are proven to be effective to fit functions defined on the point set (see the appendix in~\cite{qi2017pointnet}). Then, four fully connected layers are applied to the global features. Their output channels are 512, 128, 32, 2. The first three layers are also connected with ReLU activation, batch normalization, and (70\%) dropout layers. 

\subsection{Post-processing}
In most of the time, the meshes that we are processing are not perfectly intrinsically symmetric. The entries of functional matrices would not be exactly -1 and +1. Moreover, owing to the imperfect triangulation and discretization of Laplacian operator, in numerical computation, eigenvalues are mostly non-repeating, but there are actually eigenfunction spaces with multiple eigenfunctions. Therefore, the entries associated with sub-eigenfunction spaces need more entries, usually in the form of an orthogonal sub-matrix, to describe the functional map. 

In consideration of the above two reasons, we use a post-processing step to correct the matrix and convert it to point-wise correspondence. Similar to~\cite{nagar2018fast}, we use functional constraints~\cite{ovsjanikov2012functional} to align the correspondences. As shown in Fig.~\ref{fig:self1}, this can correct initially imprecise mappings.

\section{Results and Evaluation}\label{sec:exp}
We first describe the implementation details of our method in Section~\ref{sec:imp}. 
In Section~\ref{sec:cmp} we compare our method with existing methods, both qualitatively and quantitatively. In addition to the accuracy of symmetry, we also measure the run time of different methods, showing the significant superiority of our method efficiently. We further test the robustness of our method in Section~\ref{sec:rob}. Due to the shared-weight structure of our network, our method stays robust under different topology and vertex numbers.

\subsection{Implementation Details}\label{sec:imp}

We now present details of the training and test process of our SignNet.

The computation of Laplacian matrix and eigenvectors are described in Section~\ref{sec:fm}. Please refer to~\cite{Meyer2003} for more implementation details related to these steps. 
We implement the neural network architecture with Tensorflow.
The network is optimized using Adam~\cite{kingma2014adam} solver. The initial learning rate is set to $1\times 10^{-4}$ and momentum is 0.9. We choose to truncate at first 12 lowest eigenvectors (i.e., $K=12$), and by default the input feature has 4 dimensions, composed of first 3 eigenvectors (i.e., $t=3$) and the $i$-th eigenvector. We train the network for 500 epochs on a PC with an NVIDIA 1080TI GPU and an intel i7-7700 CPU.  

\begin{table}[t]
	\begin{center}
		\setlength{\abovecaptionskip}{2pt}
		\caption{Comparison of correspondence rate, mesh rate, and running time on the SCAPE dataset. We compare our method with MT~\cite{kim2010mobius}, BIM~\cite{kim2011blended}, OFM~\cite{liu2015properly}, GRS~\cite{wang2017group}, and FA~\cite{nagar2018fast}.}	
		\begin{adjustbox}{max width=0.46\textwidth}
		\begin{tabular}{c|cccccc}
			\hline
			&MT&BIM&OFM&GRS&FA&Ours \\ \hline
			Corr. Rate(\%)&82.0&84.8&91.7&94.5&97.5&98.1 \\
			Mesh Rate(\%)&71.8&76.1&97.2&98.6&100&100 \\
			Time(s)&18.0&304.26&50.70&20.28&6.77&0.06 \\  \hline
		\end{tabular}	        
		\end{adjustbox}
		\label{tab:scape_comp}
	\end{center}
\end{table}

\begin{table}[t]
	\begin{center}
		\setlength{\abovecaptionskip}{2pt}
		\caption{Correspondence rate (\%) comparison on TOSCA.}
		\begin{tabular}{c|cccccc}
			\hline
			&MT&BIM&OFM&GRS&FA&Ours \\ \hline
			Cat&66.0&93.7&90.0&96.5&95.6&96.0 \\
			Centaur&92.0&100&96.0&92.0&100&100 \\
			David&82.0&97.4&94.8&92.5&96.2&97.2 \\
			Dog&91.0&100&93.2&97.4&98.8&100 \\
			Horse&92.0&97.1&95.2&99.5&97.3&96.4 \\
			Michael&87.0&98.9&94.6&91.4&96.5&98.7 \\
			Victoria&83.0&98.3&98.7&95.5&96.2&97.8 \\
			Wolf&100&100&100&100&100&100 \\
			Gorilla&-&98.9&98.9&100&100&100 \\  \hline
			Average&85.0&98.0&95.1&94.5&97.8&98.1 \\  \hline
		\end{tabular}		        
		\label{tab:tosca_cr}
	\end{center}
\end{table}

\subsection{Comparison of Results}\label{sec:cmp}
As one of the biggest advantages of learning-based methods, our algorithm runs much faster than previous sampling based intrinsic symmetry detection algorithms. Also, the neural network can learn some common properties of eigenfunctions across models to distinguish the sign of eigenfunctions. This would avoid randomness of sampling, so also has better performance in terms of correspondence accuracy. In this section, we compare our method with state-of-the-art methods including MT~\cite{kim2010mobius}, BIM~\cite{kim2011blended}, OFM~\cite{liu2015properly}, GRS~\cite{wang2017group}, and FA~\cite{nagar2018fast} on the following three metrics, widely used in the literature: 
\begin{enumerate}
\item \textit{Correspondence rate}: Assume that $(m,m'), m,m'\in\mathcal{M}$ is a ground truth correspondence pair, and the algorithm's prediction is $(m,T(m))$. If the geodesic distance $d_g(m',T(m))$ between $m'$ and $T(m)$ is less than the threshold $\sqrt{\frac{area(\mathcal{M})}{20}}$, then we count this point as a correct matching. Correspondence rate measures the ratio of labeled points that are correctly matched. 
\item \textit{Mesh rate}: It measures the percentage of meshes whose correspondence rate is above the threshold $\beta$. We use $\beta=75\%$, the same as~\cite{nagar2018fast,wang2017group}.
\item \textit{Time}: We measure the average run time of each algorithm to compute the symmetry.
\end{enumerate}

We compare different methods using SCAPE~\cite{anguelov2005scape} and TOSCA~\cite{bronstein2008numerical} datasets which contain intrinsically symmetric meshes, \qyl{and the ground truth symmetric correspondences are from~\cite{anguelov2005correlated}.} We also test our method on Handstand, Swing~\cite{vlasic2008articulated} and FAUST~\cite{bogo2014faust} datasets for qualitative evaluation, as no ground truth correspondences are available.
\YL{As we mentioned in Section~\ref{sec:training}, our training set is independent from the test sets, to ensure fairness.}
%Since our method requires training, to ensure fairness and demonstrate the generalizability, we take shapes from \qyl{a fusion of independent sets SHREC 2007~\cite{giorgi2007shrec}, elephant, and flamingo~\cite{sumner2004deformation}} for training our network, and apply this to test on these new shape sets. 

The results on the SCAPE dataset of deformed human shapes are reported in Table~\ref{tab:scape_comp}. As can be seen, our method achieves the best accuracy: improving the correspondence rate from the previous best $97.5\%$ (FA) to $98.1\%$.
%, a significant drop of incorrect matchings by $84\%$ 
Both our method and FA achieve $100\%$ mesh correct rate. In terms of runtime, our method is over 100 times faster than FA, and even more than other existing methods. 

The results on the TOSCA dataset are reported in Tables~\ref{tab:tosca_cr} and~\ref{tab:tosca_mr} for the comparisons of correspondence rate and mesh rate, respectively. We report performance on individual object categories, and the overall average. Our method has similar improvements compared with existing methods. Some qualitative comparison is shown in Fig.~\ref{fig:comp}.

\begin{figure}[t]
	\begin{center}
		\includegraphics[width=\linewidth]{./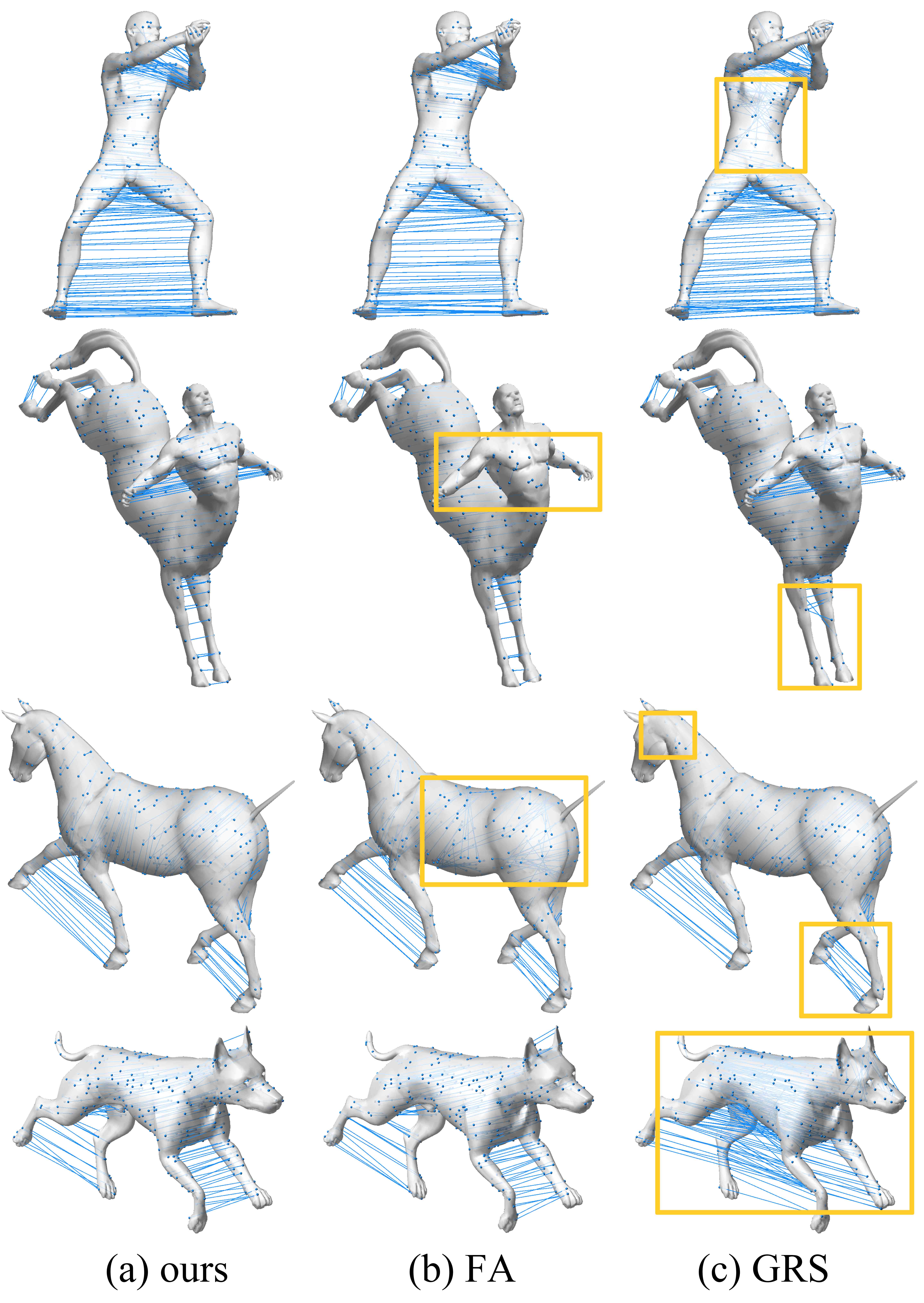}
	\end{center}
	\caption{\qyl{Qualitative comparison with previous work. (a) is symmetry predicted by our method; (b) is from FA~\cite{nagar2018fast}; (c) is result of GRS~\cite{wang2017group}. We can see that our method has the least artifacts when detecting symmetry.}}
	\label{fig:comp}
\end{figure}

\begin{table}[t]
	\begin{center}
		\setlength{\abovecaptionskip}{2pt}
		\caption{Mesh rate (\%) comparison on TOSCA.}	
		\begin{tabular}{c|cccccc}
			\hline
			&MT&BIM&OFM&GRS&FA&Ours \\ \hline
			Cat&54.6&90.9&90.9&100&100&100 \\
			Centaur&100&100&100&100&100&100 \\
			David&57.1&100&100&100&100&100 \\
			Dog&88.9&100&88.9&100&100&100 \\
			Horse&100&100&87.5&100&100&100 \\
			Michael&75&100&100&100&100&100 \\
			Victoria&63.6&100&100&100&100&100 \\
			Wolf&100&100&100&100&100&100 \\
			Gorilla&-&100&100&100&100&100 \\  \hline
			Average&76&98.7&92.6&100&100&100 \\  \hline
		\end{tabular}	        
		\label{tab:tosca_mr}
	\end{center}
\end{table}

\subsection{Evaluation of Design Choices}\label{sec:eva}

As we said before, by default we use the first three Laplacian eigenfunctions as the coordinates to embed vertices into an intrinsic space. Compared to Laplacian eigenfunctions, the raw positions are not invariant under global rigid transformation, nor under non-rigid  isometric deformation, so not suitable for predicting the sign of eigenfunctions on the mesh. In this experiment, we compare using the positions $(\bm{x},\bm{y},\bm{z},\bm{\phi}_i)$ versus eigenfunctions $(\bm{\phi}_1,\bm{\phi}_2,\bm{\phi}_3,\bm{\phi}_i)$ as input. Table~\ref{tab:design} lists the average accuracy of sign prediction on TOSCA and SCAPE datasets. It shows that the accuracy using the position (denoted as \textit{Pos.}) is much lower than that of our design. During experiments, we observe that when models have scales in a large range, the network with position input performs even worse.

We compute the functional map matrix by independently predicting the sign of eigenfunctions. This strategy circumvents the flip of signs and permutation of eigenfunctions. To show the advantage of this strategy, we design another network which takes all the eigenfunctions as input and predicts the whole $K$ diagonal entries at once. We denote this alternative design as \textit{Diag.} in Table~\ref{tab:design}. We can see the accuracy of sign prediction is much lower than ours. This is probably  because the input and output dimensions are too high for the network to learn.

The input to the network is the first $t$ eigenfunctions as well as the $i$-th eigenfunction, i.e., $[\bm{\phi}_1,...,\bm{\phi}_t; \bm{\phi}_i]$. Too small the $t$ value would make different vertices indistinguishable, impossible to determine the sign. And if $t$ is too big, it would make the network more complex, and introduce more redundant noisy high-frequency eigenvectors. Here we vary $t$ from 2 to 4. The table shows that $t=3$ (Ours) achieves the best performance.
Our input is defined on vertices.
Although existing point-based deep learning methods such as PointNet~\cite{qi2017pointnet} take extrinsic point coordinates as input,  it is possible to feed the same input to such architectures for prediction. 
We also test this by feeding our input directly to PointNet~\cite{qi2017pointnet}, and report the accuracy of sign prediction. The performance is also lower than that of our method. This is probably due to our compact network design that generalizes well to new data. 

\begin{table}[h]
	\begin{center}
		\setlength{\abovecaptionskip}{2pt}
		\caption{Evaluation of our design choices. We compare the accuracy (=number of correctly predicted signs/total number of eigenfunctions) of different design choices.}	
		\begin{adjustbox}{max width=0.46\textwidth}
		\begin{tabular}{c|cccccc}
			\hline
			&Ours&Pos.&Diag.&2 Eig.&4 Eig.&PointNet\\ \hline
			
			Acc.(\%)&98.4&60.6&66.1&96.3&94.1&96.2 \\  \hline
		
		\end{tabular}	        
		\end{adjustbox}
		\label{tab:design}
	\end{center}
\end{table}

\subsection{Robustness}\label{sec:rob}
%In this part, we show that our method is robust under different kinds of disturbance. 

We now test the robustness of our learning-based approach.

\paragraph{Different topology.} 
Since the geodesic distance and the eigenfunctions are defined on the manifold $\mathcal{M}$, the topology of $\mathcal{M}$ would contribute significantly to the computation of intrinsic symmetry. For example, MT~\cite{kim2010mobius} requires the topology to be genus-zero. In our method, since the eigenfunctions can  work consistently under different topology, the network can stay robust with topological changes. As shown in Fig.~\ref{fig:self1}, we reconstruct those meshes with self-intersection in space and %some adjacent parts are stuck together. 
the produced meshes are high-genus. 
The first row shows the original shapes with problematic regions highlighted. The second row shows the initial  correspondences of intrinsic symmetry mapping, and the correspondences after refinement. For those challenging cases, intrinsic symmetry is no longer precisely satisfied, and the refinement is effective in improving detected symmetry.

\begin{figure}[t]
	\begin{center}
		\includegraphics[width=\linewidth]{./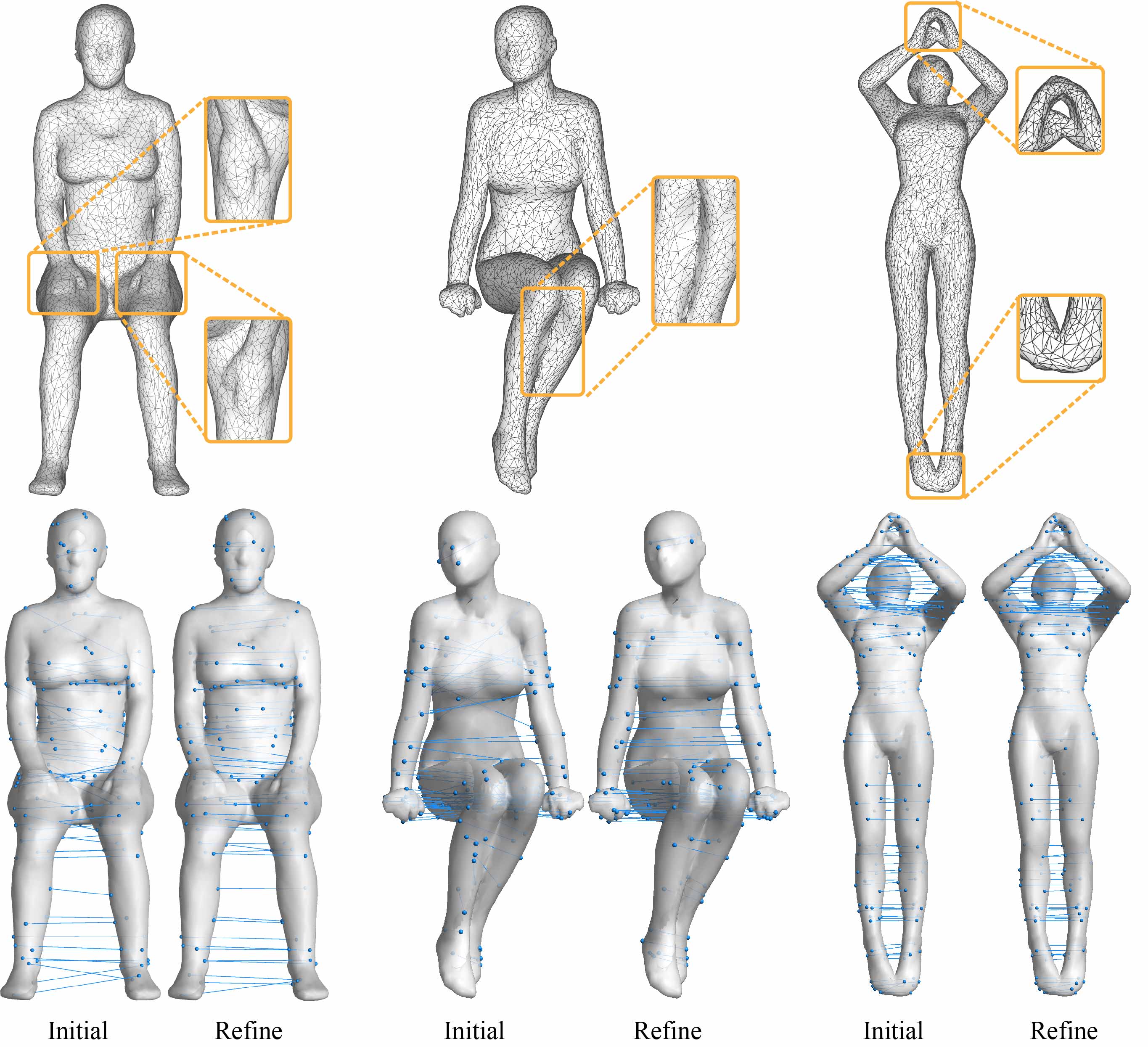}
	\end{center}
	\caption{Test with different topology. In this figure, we change the topology of models from SHREC and TOSCA to test the robustness of our method in difficult topology. We reconstruct the meshes by sticking spatially adjacent faces on the meshes together, so the shapes are no longer genus-zero. Meanwhile, the intrinsic symmetry correspondences obtained by our method are still reasonable. We also show both the initial symmetry and the refined symmetry in the second row. Since the reconstructed meshes are changed and are no longer perfectly intrinsically symmetric, the refinement step is important to polish the correspondences.}
	\label{fig:self1}
\end{figure}
% figure detail

%\textbf{Different number of vertices.}
%In this part, we show that our method can deal with different number of vertices. Compared to the fully connected network used in \cite{litany2017deep}, our network is composed of shared-weight multi-layer perceptron and max-pooling layers, which would not be disturbed by the number of points and the order of input points. 

%figure: different number. Figure, mesh corr

\paragraph{Incomplete shapes.}
Sometimes there could be \qyl{missing data} on shapes due to imperfect scanning or mesh modeling. We expect an intrinsic symmetry detection algorithm to work on such incomplete shapes. We perform a test by making some holes on the surface of the models. Fig.~\ref{fig:hole} shows the results of our method. It can be seen that the symmetry pairs on the shapes are still reasonable. 

\begin{figure}[t]
	\begin{center}
		\includegraphics[width=\linewidth]{./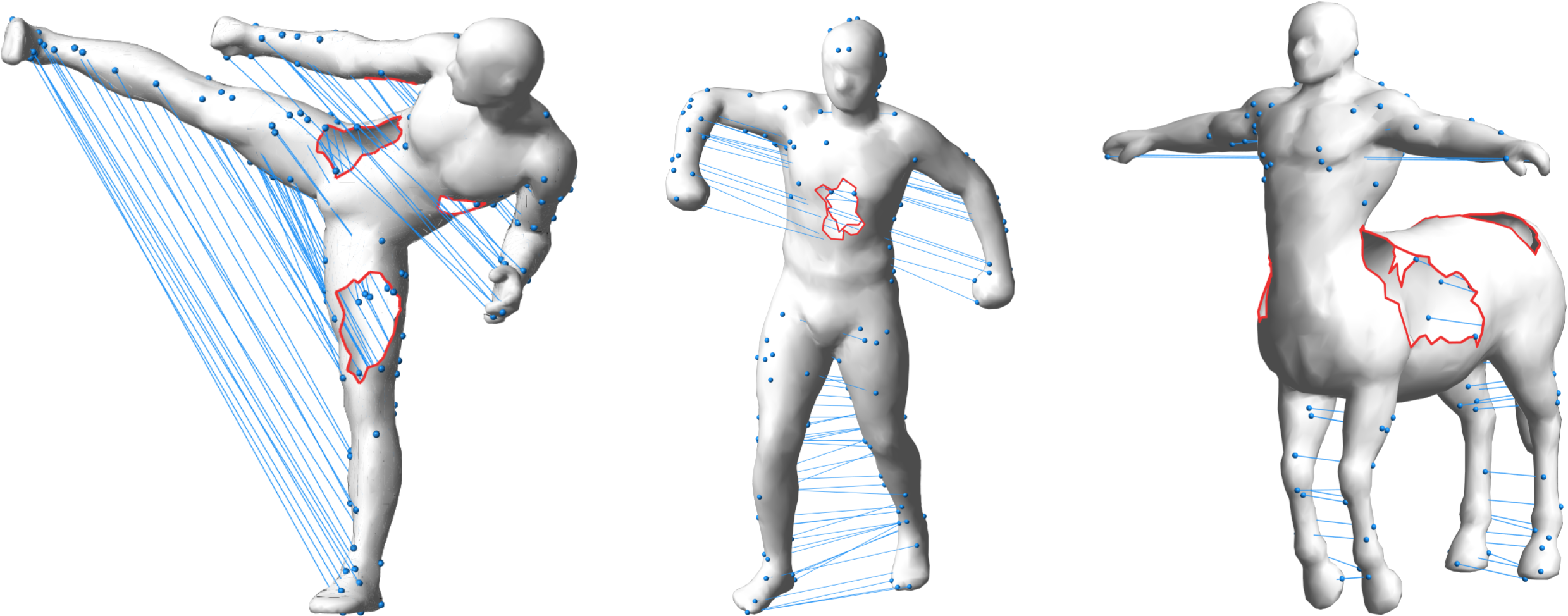}
	\end{center}
	\caption{Symmetry detection on incomplete surfaces.}
	\label{fig:hole}
\end{figure}
%figure

\subsection{Failure Case}
\qyl{As shown by the statistics, our method works well in most of cases. 
%, we show some failure cases of our method. 
However, due to the deterministic network structure, our method can only predict one symmetry result for a certain object, even if it has multiple intrinsic symmetries. In Fig.~\ref{fig:failure}, we can see that, the table has more than one reflectional symmetry plane, while our method cannot predict all of them. It would be our future work to extend our method to predict the entire symmetry group end-to-end.}

\begin{figure}[t]
	\begin{center}
		\includegraphics[width=\linewidth]{./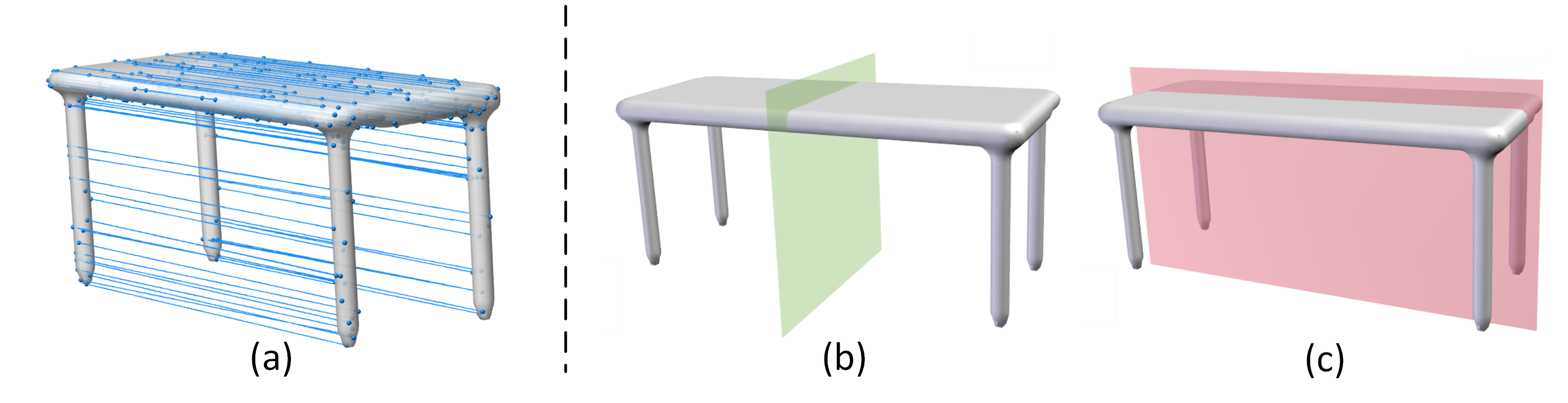}
	\end{center}
	\caption{\YL{Failure case of our method. (a) is the symmetry detection result of the $151$st model in SHREC 2007~\cite{giorgi2007shrec} computed by our method, (b) is the symmetry plane corresponding to our result, and (c) is another possible symmetry plane.}}
	\label{fig:failure}
\end{figure}
%figure

 \subsection{More Qualitative Results}\label{sec:more}
Fig.~\ref{fig:more} shows more results, \qyl{from TOSCA and FAUST~\cite{bogo2014faust} datasets.} We can see that our method generalizes well to various kinds of shapes and poses.
%\qyl{last row are models from SHREC07, but not in the training set.}
%results on more dataset

\begin{figure*}[t]
	\begin{center}
		\includegraphics[width=\linewidth]{./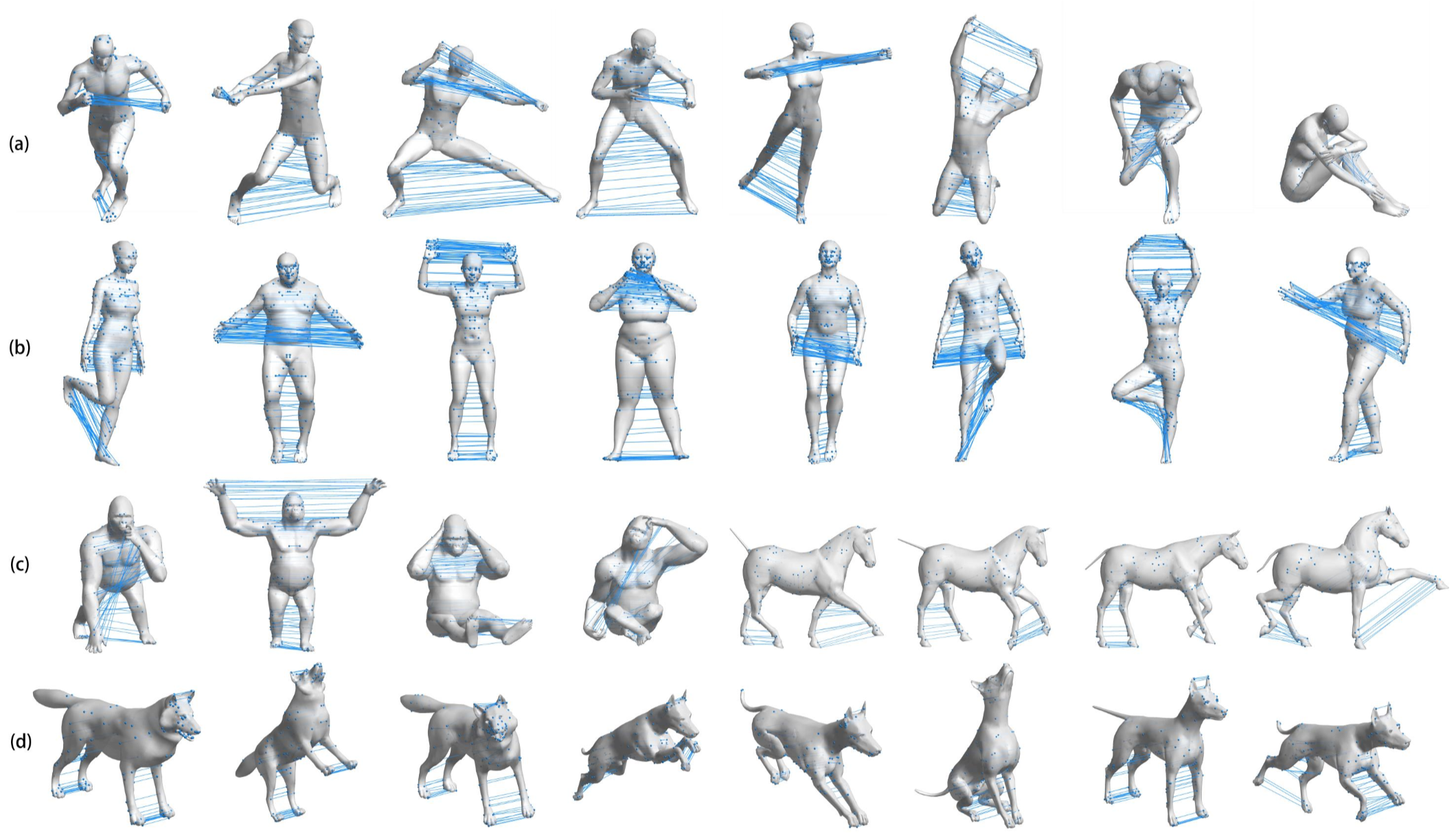}
	\end{center}
	\caption{More qualitative results. In this part, we show that our method generalizes well to various shapes including male, female, gorilla, horse, wolf, and dog~\cite{bronstein2008numerical,bogo2014faust}.}
	\label{fig:more}
\end{figure*}

\section{Conclusions and Future Work}\label{sec:conc}

In this paper, we presented a novel learning-based approach to intrinsic reflectional symmetry detection. Our method is based on functional maps and further develops a neural network architecture that predicts the sign of a Laplacian eigenfunction at a time. We design the network to take first few Laplacian eigenfunctions, in addition to the eigenfunction to be predicted. Extensive experiments show the real-time performance and superior accuracy compared with state-of-the-art methods. We also performed experiments to validate design choices and robustness of our method in challenging cases.

This work addresses global intrinsic reflectional symmetry, which is most common in practice. As future work, it would be interesting to also include rotational symmetry detection, although the property of rotational symmetry functional map matrix is more complicated.
%, it would be promising to study a uniform framework to describe both symmetry types.
Another possible direction is to extend this learning-based algorithm to partial symmetry detection. 

{\small
\bibliographystyle{ieee_fullname}
\bibliography{egbib}

\begin{thebibliography}{10}\itemsep=-1pt

\bibitem{anguelov2005scape}
Dragomir Anguelov, Praveen Srinivasan, Daphne Koller, Sebastian Thrun, Jim
  Rodgers, and James Davis.
\newblock Scape: shape completion and animation of people.
\newblock In {\em ACM transactions on graphics (TOG)}, volume~24, pages
  408--416. ACM, 2005.

\bibitem{anguelov2005correlated}
Dragomir Anguelov, Praveen Srinivasan, Hoi-Cheung Pang, Daphne Koller,
  Sebastian Thrun, and James Davis.
\newblock The correlated correspondence algorithm for unsupervised registration
  of nonrigid surfaces.
\newblock In {\em Advances in neural information processing systems}, pages
  33--40, 2005.

\bibitem{bogo2014faust}
Federica Bogo, Javier Romero, Matthew Loper, and Michael~J Black.
\newblock Faust: Dataset and evaluation for 3d mesh registration.
\newblock In {\em Proceedings of the IEEE Conference on Computer Vision and
  Pattern Recognition}, pages 3794--3801, 2014.

\bibitem{boscaini2016learning}
Davide Boscaini, Jonathan Masci, Emanuele Rodol{\`a}, and Michael Bronstein.
\newblock Learning shape correspondence with anisotropic convolutional neural
  networks.
\newblock In {\em Advances in Neural Information Processing Systems}, pages
  3189--3197, 2016.

\bibitem{bronstein2008numerical}
Alexander~M Bronstein, Michael~M Bronstein, and Ron Kimmel.
\newblock {\em Numerical geometry of non-rigid shapes}.
\newblock Springer Science \& Business Media, 2008.

\bibitem{bronstein2017geometric}
Michael~M Bronstein, Joan Bruna, Yann LeCun, Arthur Szlam, and Pierre
  Vandergheynst.
\newblock Geometric deep learning: going beyond {E}uclidean data.
\newblock {\em IEEE Signal Processing Magazine}, 34(4):18--42, 2017.

\bibitem{2013spectral}
Joan Bruna, Wojciech Zaremba, Arthur Szlam, and Yann LeCun.
\newblock Spectral networks and locally connected networks on graphs.
\newblock {\em arXiv preprint arXiv:1312.6203}, 2013.

\bibitem{dessein2017symmetry}
Arnaud Dessein, William~AP Smith, Richard~C Wilson, and Edwin~R Hancock.
\newblock Symmetry-aware mesh segmentation into uniform overlapping patches.
\newblock In {\em Computer Graphics Forum}, volume~36, pages 95--107. Wiley
  Online Library, 2017.

\bibitem{giorgi2007shrec}
Daniela Giorgi, Silvia Biasotti, and Laura Paraboschi.
\newblock Shrec: shape retrieval contest: Watertight models track.
\newblock {\em Online]: http://watertight. ge. imati. cnr. it}, 7, 2007.

\bibitem{kim2010mobius}
Vladimir~G Kim, Yaron Lipman, Xiaobai Chen, and Thomas Funkhouser.
\newblock M{\"o}bius transformations for global intrinsic symmetry analysis.
\newblock In {\em Computer Graphics Forum}, volume~29, pages 1689--1700. Wiley
  Online Library, 2010.

\bibitem{kim2011blended}
Vladimir~G Kim, Yaron Lipman, and Thomas Funkhouser.
\newblock Blended intrinsic maps.
\newblock In {\em ACM Transactions on Graphics (TOG)}, volume~30, page~79. ACM,
  2011.

\bibitem{kingma2014adam}
Diederik~P Kingma and Jimmy Ba.
\newblock Adam: A method for stochastic optimization.
\newblock {\em arXiv preprint arXiv:1412.6980}, 2014.

\bibitem{lipman2010symmetry}
Yaron Lipman, Xiaobai Chen, Ingrid Daubechies, and Thomas Funkhouser.
\newblock Symmetry factored embedding and distance.
\newblock In {\em ACM Transactions on Graphics (TOG)}, volume~29, page 103.
  ACM, 2010.

\bibitem{liu2015properly}
Xiuping Liu, Shuhua Li, Risheng Liu, Jun Wang, Hui Wang, and Junjie Cao.
\newblock Properly constrained orthonormal functional maps for intrinsic
  symmetries.
\newblock {\em Computers \& Graphics}, 46:198--208, 2015.

\bibitem{masci2015geodesic}
Jonathan Masci, Davide Boscaini, Michael Bronstein, and Pierre Vandergheynst.
\newblock Geodesic convolutional neural networks on {R}iemannian manifolds.
\newblock In {\em Proceedings of the IEEE international conference on computer
  vision workshops}, pages 37--45, 2015.

\bibitem{Meyer2003}
Mark Meyer, Mathieu Desbrun, Peter Schr\"{o}der, and Alan~H. Barr.
\newblock Discrete differential-geometry operators for triangulated
  2-manifolds, 2003.

\bibitem{mitra2007symmetrization}
Niloy~J Mitra, Leonidas~J Guibas, and Mark Pauly.
\newblock Symmetrization.
\newblock In {\em ACM Transactions on Graphics (TOG)}, volume~26, page~63. ACM,
  2007.

\bibitem{mitra2014structure}
Niloy~J Mitra, Michael Wand, Hao Zhang, Daniel Cohen-Or, Vladimir Kim, and
  Qi-Xing Huang.
\newblock Structure-aware shape processing.
\newblock In {\em ACM SIGGRAPH 2014 Courses}, page~13. ACM, 2014.

\bibitem{mitra2010illustrating}
Niloy~J Mitra, Yong-Liang Yang, Dong-Ming Yan, Wilmot Li, Maneesh Agrawala,
  et~al.
\newblock Illustrating how mechanical assemblies work.
\newblock 2010.

\bibitem{nagar2018fast}
Rajendra Nagar and Shanmuganathan Raman.
\newblock Fast and accurate intrinsic symmetry detection.
\newblock In {\em Proceedings of the European Conference on Computer Vision
  (ECCV)}, pages 417--434, 2018.

\bibitem{ovsjanikov2012functional}
Maks Ovsjanikov, Mirela Ben-Chen, Justin Solomon, Adrian Butscher, and Leonidas
  Guibas.
\newblock Functional maps: a flexible representation of maps between shapes.
\newblock {\em ACM Transactions on Graphics (TOG)}, 31(4):30, 2012.

\bibitem{ovsjanikov2008global}
Maks Ovsjanikov, Jian Sun, and Leonidas Guibas.
\newblock Global intrinsic symmetries of shapes.
\newblock In {\em Computer graphics forum}, volume~27, pages 1341--1348. Wiley
  Online Library, 2008.

\bibitem{qi2017pointnet}
Charles~R Qi, Hao Su, Kaichun Mo, and Leonidas~J Guibas.
\newblock {PointNet}: Deep learning on point sets for 3d classification and
  segmentation.
\newblock {\em Proc. Computer Vision and Pattern Recognition (CVPR), IEEE},
  1(2):4, 2017.

\bibitem{Rodol2017deep}
Emanuele Rodol{\`{a}}, Zorah L{\"{a}}hner, Alexander~M. Bronstein, Michael~M.
  Bronstein, and Justin Solomon.
\newblock Functional maps representation on product manifolds.
\newblock {\em Comput. Graph. Forum}, 38(1):678--689, 2019.

\bibitem{rustamov2007laplace}
Raif~M Rustamov.
\newblock Laplace-beltrami eigenfunctions for deformation invariant shape
  representation.
\newblock In {\em Proceedings of the fifth Eurographics symposium on Geometry
  processing}, pages 225--233. Eurographics Association, 2007.

\bibitem{speciale2016symmetry}
Pablo Speciale, Martin~R Oswald, Andrea Cohen, and Marc Pollefeys.
\newblock A symmetry prior for convex variational 3d reconstruction.
\newblock In {\em European Conference on Computer Vision}, pages 313--328.
  Springer, 2016.

\bibitem{sumner2004deformation}
Robert~W Sumner and Jovan Popovi{\'c}.
\newblock Deformation transfer for triangle meshes.
\newblock {\em ACM Transactions on graphics (TOG)}, 23(3):399--405, 2004.

\bibitem{tevs2014relating}
Art Tevs, Qixing Huang, Michael Wand, Hans-Peter Seidel, and Leonidas Guibas.
\newblock Relating shapes via geometric symmetries and regularities.
\newblock {\em ACM Transactions on Graphics (TOG)}, 33(4):119, 2014.

\bibitem{vlasic2008articulated}
Daniel Vlasic, Ilya Baran, Wojciech Matusik, and Jovan Popovi{\'c}.
\newblock Articulated mesh animation from multi-view silhouettes.
\newblock 27(3):97, 2008.

\bibitem{wang2017group}
Hui Wang and Hui Huang.
\newblock Group representation of global intrinsic symmetries.
\newblock In {\em Computer Graphics Forum}, volume~36, pages 51--61. Wiley
  Online Library, 2017.

\bibitem{xu2012multi}
Kai Xu, Hao Zhang, Wei Jiang, Ramsay Dyer, Zhiquan Cheng, Ligang Liu, and
  Baoquan Chen.
\newblock Multi-scale partial intrinsic symmetry detection.
\newblock {\em ACM Transactions on Graphics (TOG)}, 31(6):181, 2012.

\bibitem{xu2009partial}
Kai Xu, Hao Zhang, Andrea Tagliasacchi, Ligang Liu, Guo Li, Min Meng, and
  Yueshan Xiong.
\newblock Partial intrinsic reflectional symmetry of 3d shapes.
\newblock {\em ACM Transactions on Graphics (TOG)}, 28(5):138, 2009.

\bibitem{yi2017syncspeccnn}
Li Yi, Hao Su, Xingwen Guo, and Leonidas~J Guibas.
\newblock {SyncSpecCNN}: Synchronized spectral {CNN} for 3d shape segmentation.
\newblock In {\em CVPR}, pages 6584--6592, 2017.

\end{thebibliography}
}

\end{document}